\newtheorem{mydef}{Definition}
\newtheorem{theorem}{Theorem}
\newtheorem{lemma}{Lemma}
\begin{document}

\sloppy

%% Paper Title
%% You can use linebreaks \\ within to get better formatting as
%% desired.
\title{Random Linear Network Codes for Secrecy over Wireless Broadcast Channels}

%% Author names and affiliations:
%%
%% Avoiding spaces at the end of the author lines is not a problem with
%% conference papers because we don't use \thanks or \IEEEmembership.
%%
%% For several authors with only one affiliation:
%%
% \author{
%   \IEEEauthorblockN{Hui-Ting Chang and Stefan M.~Moser}
%   \IEEEauthorblockA{Department of Electrical and Computer Engineering\\
%     National Chiao Tung University (NCTU)\\
%     Hsinchu, Taiwan\\
%     Email: \{email-of-hui-ting,email-of-stefan\}@ieee.org}
% }
%%
%% For up to three affiliations:
%%
\author{
  \IEEEauthorblockN{Shahriar Etemadi Tajbakhsh}
  \IEEEauthorblockA{Research School of Engineering\\
    The Australian National University\\
    Canberra, Australia\\
    Email: shahriar.etemadi-tajbakhsh@anu.edu.au}
  \and
  \IEEEauthorblockN{Parastoo Sadeghi}
  \IEEEauthorblockA{Research School of Engineering\\
    The Australian National University\\
    Canberra, Australia\\
    Email: parastoo.sadeghi@anu.edu.au}

}
%%
%% For over three affiliations, or if they all won't fit within the width
%% of the page, use this alternative format:
%%
% \author{
%   \IEEEauthorblockN{
%     Michael Shell\IEEEauthorrefmark{1},
%     Homer Simpson\IEEEauthorrefmark{2},
%     James Kirk\IEEEauthorrefmark{3},
%     Montgomery Scott\IEEEauthorrefmark{3} and
%     Eldon Tyrell\IEEEauthorrefmark{4}}
%   \IEEEauthorblockA{
%     \IEEEauthorrefmark{1}School of Electrical and Computer Engineering\\
%     Georgia Institute of Technology, Atlanta, Georgia 30332--0250\\
%     Email: see http://www.michaelshell.org/contact.html}
%   \IEEEauthorblockA{
%     \IEEEauthorrefmark{2}Twentieth Century Fox, Springfield, USA\\
%     Email: homer@thesimpsons.com}
%   \IEEEauthorblockA{
%     \IEEEauthorrefmark{3}Starfleet Academy, San Francisco, California 96678-2391\\
%     Telephone: (800) 555--1212, Fax: (888) 555--1212}
%   \IEEEauthorblockA{
%     \IEEEauthorrefmark{4}Tyrell Inc., 123 Replicant Street, Los Angeles, California 90210--4321}
% }

%% Use for special paper notices
%\IEEEspecialpapernotice{(Invited Paper)}

%% To balance the two columns, you should reduce the text-height of
%% the last page using the following command:
%%%%%%%%%%%%%%%%%%%%%%%%%%%%%%%%%%%%%%%%%%%%%%%%%%%%%%%%%%%%%%%%%%%%%
%\addtolength{\textheight}{-9.35cm}
%%%%%%%%%%%%%%%%%%%%%%%%%%%%%%%%%%%%%%%%%%%%%%%%%%%%%%%%%%%%%%%%%%%%%
%% with an appropriate value. This command must be place on the second
%% last page, i.e., for a one-page abstract here, for a two-page
%% abstract right after the \maketitle command.

%% Create the title:
\maketitle

%% Abstract:
%% For the final version of the accepted paper, please make sure you
%% remove the comment "THIS PAPER IS ELIGIBLE FOR THE STUDENT PAPER
%% AWARD."
%%
\begin{abstract}
   We consider a set of $n$ messages and a group of $k$ clients. Each
client is privileged for receiving an arbitrary subset of the
messages over a broadcast erasure channel, which generalizes
scenario of a previous work. We propose a method for secretly
delivering each message to its privileged recipients in a way that
each receiver can decode its own messages but not the others'. Our
method is based on combining the messages using linear network
coding and hiding the decoding coefficients from the unprivileged
clients. We provide an information theoretic proof for the secrecy
of the proposed method. In particular we show that an unprivileged
client cannot obtain any meaningful information even if it holds the
entire set of coded data packets transmitted over the channel.
Moreover, in our method, the decoding complexity is desirably low at
the receiver side.

\end{abstract}

%% Appendix:
%% If needed a single appendix is created by
%\appendix
%% If several appendices are needed, then the command
%\appendices
%% in combination with further \section-commands can be used.

%% Use \section* for acknowledgement

\section{Introduction}
Wireless medium is potentially vulnerable to different types of
security attacks as the information is broadcast in the air and
might be easily accessed or manipulated by an unprivileged party.
Specifically, secrecy of transmission sessions is a major concern
which implies the necessity of cryptographic methods against
eavesdroppers. On the other hand, recent advances in cooperative
networking schemes unveils the advantage of cooperation among
wireless devices as a consequence of diversity in wireless channels.
Despite the benefits of cooperation, it might increase the security
risks in presence of dishonest participants. As a major class of
such cooperative settings, network coding techniques require that
the wireless devices to be enabled to listen to transmission
sessions which are not necessarily intended for them and buffer what
they hear on the channel as side information. Because of diversity
in packet reception at different users, the sender might benefit
from network coding techniques to merge multiple transmission
sessions into one
\cite{COPE,BroadcastNguyen,MultipleUnicastHARQ,ARQMedard,YunnanWireless},
etc. As the users are supposed to buffer and process some messages
for which they are not their target recipients, some mechanisms
should be designed to protect the secrecy of those messages during
the cooperation.

In this paper, we consider $k$ wireless users which are connected to
a base station and share a broadcast erasure channel. Each client is
interested in receiving an arbitrary subset of $n$ messages. The
clients are enabled to listen to all the transmissions over the
channel and save what they receive in their buffers. As each client
might have missed some parts of information which it needs to decode
its own messages, either the base station should retransmit the
missing parts or the clients should cooperate with each other to
obtain the missing parts. We propose a method to maintain the
secrecy of individual messages against any unprivileged party
(either those clients who are not the target recipient of that
message or any external eavesdropper). The essence of the proposed
method is to combine all the messages together at the base station
regardless of their target recipients using a special form of random
linear network coding and broadcast the resulting packets to all the
clients; Each client privately receives a set of decoding
coefficients which enables it to decode its own messages but not the
others'. In other words, the main idea is to protect the decoding
coefficients against unprivileged parties.

  A brief summary of the contribution of this paper is as follows. We
propose a method to maintain the secrecy of transmissions over a
wireless broadcast channel by coding the messages using a special
form of random linear network coding. This paper extends the
scenario discussed in \cite{ITW2012} to a general scenario that each
client is interested in an arbitrary subset of messages. Also we
prove that using our proposed method, the eavesdropper or any
unprivileged client can not obtain any \emph{meaningful} information
about the messages. Moreover, our proof implies that the field size
of operations can be kept small which substantially reduces the
computational complexity especially at the receiver side which is a
crucial improvement over \cite{ITW2012}.

The rest of this paper is organized as follows. In section
\ref{sec:Related}, the position of this paper within the literature
is highlighted. In section \ref{sec:System} the proposed system is
introduced and some specifications and advantages of our proposed
method is discussed. Section \ref{sec:Example} provides a clarifying
example of the entire system. Finally, in section \ref{sec:Proof},
secrecy of the proposed method is proven.

\section{Related Work} \label{sec:Related}
This paper is an extension of the work in \cite{ITW2012}, where only
eavesdroppers with bounded computational power were considered to
wiretap a shared broadcast channel. In the current paper, we provide
an information theoretic proof for security of the proposed method
and we show that an eavesdropper would not be able to obtain any
meaningful information about the protected messages. Moreover, in
\cite{ITW2012} each client is only interested in a distinct message
while the current paper generalizes the proposed method to the
scenario that each of the $k$ clients is interested in receiving an
arbitrary subset of $n$ messages over a shared broadcast erasure
channel. Also, unlike \cite{ITW2012}, in this paper we operate over
a field size of $2$ as it is shown that the field size is not needed
to be large.

We take the security advantages of random linear network codes (RLNC
\cite{RandomNetCod}) in this paper. In \cite{SecureNetCod} the
achievable rate region of network coding with perfect secrecy in
presence of an eavesdropper (which is able wiretap some of the links
in a multicast scenario) is characterized. Perfect secrecy is a
strict constraint which is hardly satisfied and dramatically
degrades the throughput of a multicast network. This condition is
relaxed in \cite{WeaklySecure} to a weaker security condition which
is still satisfying in a practical sense. Weakly security guarantees
that the eavesdropper can not obtain any \emph{meaningful}
information about the messages. In this paper, we take this
definition of security to prove the secrecy of our method. The
capacity region with perfect secrecy constraint over a broadcast
channel is characterized in \cite{SecrecyLaszlo} and a method is
proposed to achieve the identified capacity region. Since we have
relaxed the condition of perfect secrecy, higher data rates can be
achieved using our proposed method.

On the other hand our proposed method is based on protecting the
decoding coefficients of linear combinations of messages from
eavesdroppers. Similar concept has been considered by
\cite{CryptAnalysisNetCod,MultiResolution} where
\cite{CryptAnalysisNetCod} provides information theoretic bounds and
theorems to guarantee that the mutual information between the
transmitted information over the links and respectively, the coding
coefficients or the original messages is small and zero under some
special conditions. In \cite{MultiResolution} a coding scheme is
proposed (based on protection of coding coefficients) for
multi-resolution video streaming where each client receives a number
of layers of the video in a successive refinement fashion according
to its subscription level. Therefore, the rest of layers should not
be revealed to this specific client. Our proposed coding algorithm
provides a substantial freedom to generalize the code protection
based methods to a scenario that each client is interested in an
arbitrary subset of messages which distinguishes our work from
\cite{CryptAnalysisNetCod,MultiResolution} (In particular, we
protect the \emph{decoding coefficients} rather than encoding
coefficients). Moreover we provide a method to update the decoding
coefficients periodically which improves secrecy.

Another important feature of our proposed scheme is its low
complexity which is a major issue due to matrix inversion operations
required to decode network coded based schemes. This problem has
been addressed in different studies such as \cite{ShojaniaParallel}
in terms of
 computational power and energy consumption limitations of
wireless devices. In our method, this burden has been shifted to the
base station which often has considerably larger computational
resources and theoretically unlimited energy than the small receiver
devices. In our scheme, the devices are only supposed to generate
linear combinations of the packets they have received over a small
size finite field.

\section{System and Model} \label{sec:System}
We consider a set of $n$ messages $X=\{x_1,\dots,x_n\}$ and a set of
$k$ clients $C=\{c_1,\dots, c_k\}$. Each client $c_i$ is interested
in receiving an arbitrary subset of messages $\chi_i\subseteq X$
from a common base station.
 %where the set of indices of the messages
%demanded by $c_i$ is denoted by $I_i\subseteq \{1,\dots,n\} $, i.e.
%$\chi_i=\{x_{i_1},\dots,x_{i_m}\}$.
%
% where $I_i=\{i_1,\dots,i_m\}$
%and $m=|\chi_i|$.
 Each message $x_i$ is composed of $T$ elements
each drawn from a finite field $\mathbb{F}_q$ of size $q$ and is
denoted by $x_i^{(t)}$. For the ease of our analysis and also to
reduce the decoding complexity at the receiver side, we assume that
all the operations are done over a finite field size of $2$, i.e.
$\mathbb{F}_2$. We consider $T$ rounds of transmission, where at
each round $1\leq t \leq T$, the set of elements
$X^{(t)}=\{x_1^{(t)},\dots, x_n^{(t)}\}$ should be delivered to the
clients at the end of round. The set of clients who are the
privileged recipients of message $x_j$ are denoted by $R_j=\{c_u:
x_j\in \chi_u\}$.

Each round of transmission incorporates three phases (1) The set of
elements $X^{(t)}$ are encoded as it will be described later and the
set of encoded elements (denoted by
$P^{(t)}=\{P_1^{(t)},\dots,P_n^{(t)}\}$) are transmitted over a
shared broadcast channel to all the clients. Each client $c_i$ might
receive each element $P_i^{(t)}$ with a probability $1-p_i$. (2) The
missing packets by the clients at each round should be retransmitted
by the either the base station or by the clients if the clients have
received the set of encoded elements collectively. (3) The base
station provides a set of decoding coefficients privately to each
client where each client is enabled to decode its own set of
elements but not the other ones', therefore the secrecy of
individual messages are maintained. In the following, the three
mentioned phases are described. In section \ref{sec:Example}, a
comprehensive example is provided to illustrate the entire process.
\begin{itemize}
\item \emph{Broadcast Phase:} At each round $t$, the base station
generates the set of encoded elements by solving the system of
equations $\mathbf{X}^{(t)}=\mathbf{A}^{(t)}\mathbf{P}^{(t)}$, where
$\mathbf{A}^{(t)}=[\alpha_{ij}]_{n\times n}$ is a matrix of randomly
chosen elements $\alpha_{ij}$ from the finite field $\mathbb{F}_q$,
$\mathbf{P}^{(t)}=[P_i^{(t)}]_{n\times 1}$ is the vector of encoded
elements and $\mathbf{X}^{(t)}=[x_i^{(t)}]_{n\times 1}$ is the
vector of message elements at round $t$. The set of encoded elements
$P^{(t)}$ is broadcast to all the clients by the base station. To
recover a message $x_i^{(t)}$, a corresponding client $c_j$ for whom
$x_i\in X_j$, needs the $i$'th row of matrix $\mathbf{A}^{(t)}$
denoted by $\mathbf{A}_i^{(t)}$ , as
$x_i^{(t)}=\mathbf{A}_i^{(t)}\mathbf{P}^{(t)}$.
 To prevent
unprivileged clients i.e. $\bar{R}_i$ or any other external
eavesdropper to obtain message $x_i$, the vector
$\mathbf{A}_i^{(t)}$ should be delivered privately and securely to
each client $c_u\in R_i$ as a secret key. The process of delivering
these vectors of decoding coefficients $\mathbf{A}_i^{(t)}$ to the
corresponding set of privileged clients $R_i$ is central to this
paper and will be discussed extensively immediately in this section.
\item \emph{Packet Recovery Phase:} As mentioned earlier, we model the channel between the base
station and each client $c_i$ as an erasure channel, i.e. we assume
each encoded element is received by the client $c_i$ with a
probability $1-p_i$. Hence, the missing packets should be either
retransmitted by the base station or provided via cooperation among
the clients by exchanging the missing chunks of information with
each other. Detail of the network coded based retransmission schemes
is not at the scope of this paper (We refer the reader to
\cite{ITW2012} for more information).
\item \emph{Key Sharing} As mentioned earlier, we need to provide
the sets of decoding coefficients privately to privileged clients.
Our method is based on a hybrid private-public key scheme, where an
\emph{initial key} is associated to each message $x_i$. Each initial
key is composed of two components and can be represented as a pair
of functions $\mathcal{K}_i=(\pi_{\tilde{N}}^i, \kappa_i)$. As the
first component of the function $\mathcal{K}_i$, permutation
function is formally defined as follows:
\end{itemize}
\begin{mydef}
A permutation function of the set $\tilde{N}=\{1,\dots,n\}$ is a
one-to-one and covering function denoted by $v=\pi_{\tilde{N}}^i(u)$
which randomly maps each element $u$ in $\tilde{N}$ to an element
$v$ in $\tilde{N}$. $i$ is an arbitrary index which is used later to
identify the index of the corresponding message.
\end{mydef}
\begin{mydef}
 A vector permutation function maps a vector $\mathbf{u}=[u_1 \dots u_n]$ to a
 vector $\mathbf{v}=[v_ 1 \dots v_n]$ such that
 $v_j=\pi_{\tilde{N}}^i(u_j),~\forall j\in \tilde{N}$. Also, we denote
 the vector $[1 \dots n]$ by $\tilde{\mathbf{n}}$.
 \end{mydef}
The second component of the initial key is  $\kappa_i:
\tilde{N}\rightarrow \mathbb{F}_2$ which maps each element $j\in
\tilde{N}$ to $\kappa^i(j),~\forall i,j\in \tilde{N}$ where
$\kappa^i(j)$ has been chosen randomly from a uniform distribution
over the finite field $\mathbb{F}_2$ (i.e.
$prob(\kappa^i(j)=1)=\frac{1}{2}$ and $
prob(\kappa^i(j)=0)=\frac{1}{2}, \forall i,j\in \tilde{N}$. The
initial key $\mathcal{K}_i$ is fixed during all the $T$ rounds of
transmission and is privately provided to a client $c_j$ if $c_j\in
R_i$. The set of initial keys can be either physically delivered to
the clients (e.g. as a part of their hardware), or can be
distributed using a method similar to \cite{SecrecyLaszlo} where it
is expected that after sufficient number of transmissions each set
of privileged clients can pick a key which has not been heard by the
others.

%The initial key $\mathcal{K}_i$ acts as a function which operates on
%a publicly announced input (possibly at each round $t$ or at the
%beginning of a period of multiple of transmission rounds) to
%generate the vector of decoding coefficients $\mathbf{A}_i^{(t)}$.
%Therefore the vectors $\mathbf{A}_i^{(t)}$ can be renewed at each
%round of transmission with an overhead of $n$ bits. However, it
%should be noted that the same input is applied to all the functions
%$\{\mathcal{K}_1,\dots,\mathcal{K}_n\}$ functions which is broadcast
%publicly.

The base station generates a vector of $n$ randomly chosen elements
from $\mathbb{F}_2$ at each round of transmission (or possibly at
the beginning of a period of multiple of transmission rounds)
denoted by $\vec{\nu}^(t)$ called as \emph{regenerating vector} and
broadcast it publicly to all the clients. The decoding coefficients
are determined by this vector for all the messages from the base
station according to the following equation:
\begin{equation}
\label{main}
\mathbf{A}_i^{(t)}=\kappa_i(\pi^i(\tilde{\mathbf{n}}))+\vec{\nu}(t)
\end{equation}
In other words, the initial key $\mathcal{K}_i$ acts as a function
which operates on a publicly announced input, i.e. the regenerating
vector (possibly at each round $t$ or at the beginning of a period
of multiple of transmission rounds) to generate the vector of
decoding coefficients $\mathbf{A}_i^{(t)}$. Therefore the vectors
$\mathbf{A}_i^{(t)}$ can be renewed at each round of transmission
with an overhead of $n$ bits. However, it should be noted that the
same vector $\vec{\nu}^(t)$ is applied to initial keys of all users
$\{\mathcal{K}_1,\dots,\mathcal{K}_n\}$ at round $t$ (otherwise a
huge amount of overhead is imposed).

The outcome of the encoding process over the messages is broadcast
to all the clients and each client $c_i$ would be able to decode its
own message by computing $\mathbf{A}_i^{(t)}\mathbf{P}^{(t)}$. In
the following, some features of the proposed system is briefly
discussed:
\begin{enumerate}
\item We used the linear network codes in a reverse direction, i.e.
instead of generating linear combinations of the messages
($\mathbf{P}=\mathbf{A}\mathbf{X}$) and broadcasting them over the
channel, a system of linear equations
($\mathbf{X}=\mathbf{A}\mathbf{P}$) is solved at the base station to
generate packets $P$. Therefore, each message is related to the set
of packets with a distinct set of decoding coefficients which
enables us to generalize the proposed method to the scenario that
each client is interested in an arbitrary subset of messages without
violating the secrecy of other clients (As each client is provided
with only the decoding coefficients necessary for decoding the
messages that it is privileged for).
\item Reversing the direction of coding scheme mentioned in the last
item, also has the advantage that reduces the complexity at the
receiver side which might potentially have limited power resource
and computational capacity. In our proposed scheme, the receiver
only needs to compute a linear combination of packets for each
message instead of inverting a matrix.
\item The role of regenerating vector is to update the decoding
coefficients to maintain the uniformity of the decoding coefficients
distribution which is necessary for our proof as it will be
discussed in section \ref{sec:Proof}. The role of permutation
functions in the initial keys is to produce a huge space of
possibilities that makes it computationally hard to guess the
decoding coefficients or obtain any information about them. The
amount of leakage of information specially in the case of
non-uniform messages is an interesting topic for further
investigation (for uniform messages some bounds and theorems have
been established in \cite{CryptAnalysisNetCod}). The regenerative
vector updates is expected to play a role in minimizing the leakage
of information in this case. The regenerative vector can be updated
periodically after multiple rounds of transmissions (rather than
updating at each round) but possibly at the price of some
information leakage.
\end{enumerate}

\section{An Example}\label{sec:Example}
In this section, different parts of the proposed system is discussed
through an example. Suppose we have four clients
$C=\{c_1,\dots,c_4\}$ and a set of seven messages
$X=\{x_1,\dots,x_7\}$. Each message is composed of 64 bits,
therefore we have 64 rounds of transmission $t=1,\dots,64$ where at
each round one bit from each message is transmitted to the target
recipients. We assume the sets $\chi_1=\{x_2,x_4,x_7\}$,
$\chi_2=\{x_1,x_3,x_6\}$, $\chi_3=\{x_1,x_2,x_3,x_5,x_6\}$ and
$\chi_4=\{x_2,x_5,x_6,x_7\}$ are demanded by clients $c_1, c_2, c_3$
and $c_4$, respectively.

\begin{table}
\begin{center}
\caption{Initial Keys}
    \begin{tabular}{| l | l | l |}
    \hline
    $\mathcal{K}_1$ &$\pi_{\tilde{N}}^1=[2~ 7~ 6~ 4~ 1~ 5~ 3]$ &$\kappa_1=[1~ 1~ 0~ 1~ 0~ 0~ 1]$\\
    \hline$\mathcal{K}_2$ &$\pi_{\tilde{N}}^2=[4~ 3~ 5~ 1~ 2~ 6~ 7~]$ &$\kappa_2=[1~ 1~ 1~ 0~ 0~ 1~ 0]$\\
    \hline$\mathcal{K}_3$ &$\pi_{\tilde{N}}^3=[5~ 2~ 4~ 7~ 3~ 1~ 6~]$ &$\kappa_3=[0~ 1~ 0~ 0~ 1~ 0~ 0]$\\
    \hline$\mathcal{K}_4$ &$\pi_{\tilde{N}}^4=[3~ 1~ 7~ 6~ 5~ 4~ 2~]$ &$\kappa_4=[1~ 0~ 1~ 0~ 1~ 1~ 1]$\\
    \hline
    $\mathcal{K}_5$ &$\pi_{\tilde{N}}^4=[2~ 1~ 6~ 5~ 7~ 4~ 3~]$ &$\kappa_5=[0~ 1~ 0~ 1~ 0~ 1~ 1]$\\
    \hline
    $\mathcal{K}_6$ &$\pi_{\tilde{N}}^4=[3~ 2~ 5~ 6~ 7~ 2~ 4~]$ &$\kappa_4=[1~ 1~ 0~ 1~ 0~ 1~ 0]$\\
    \hline
    $\mathcal{K}_7$ &$\pi_{\tilde{N}}^4=[1~ 3~ 7~ 5~ 4~ 2~ 6~]$ &$\kappa_4=[1~ 0~ 1~ 0~ 1~ 0~ 1]$\\
    \hline
    \end{tabular}
   \label{tab:MyTable}
\end{center}
\end{table}
Table \ref{tab:MyTable} shows the set of initial key pairs for each
message. Now consider one round of transmission say $t=24$. Suppose
the random regenerating vector produced by the base station for this
round is $\vec{\nu}^{(24)}=[1~0~0~0~1~0~0]$. The base station builds
the matrix $\mathbf{A}^{(t)}$ by computing the equation \ref{main}
for each pair of initial keys $\mathcal{K}_1,\dots, \mathcal{K}_7$
and $\vec{\nu}^{(24)}$ which results in:
\[ \mathbf{A}^{(24)}=\left( \begin{array}{ccccccc}
0 & 1 & 0 & 1 & 0 & 0 &0 \\
1 & 1 & 0 & 1 & 0 & 1 &0 \\
0 & 1 & 0 & 0 & 1 & 0 &0 \\
0 & 1 & 1 & 1 & 0 & 0 &0 \\
0 & 0 & 1 & 0 & 0 & 1 &0 \\
1 & 1 & 0 & 1 & 1 & 1 &1 \\
0 & 1 & 1 & 1 & 1 & 0 &0\\
\end{array} \right)\]
It is easy to check that $\mathrm{det}(\mathbf{A}^{(24)})\neq 0$. If
the base station comes up with a singular matrix, it is deleted and
a new matrix is generated. It should be mentioned that the order of
transmission of the regenerating vector and the coded data elements
($P_i^{(t)}$'s) is not important as long as it is based on a common
protocol between the sender and receivers. Therefore, a batch of
coded elements with their corresponding regenerating vectors might
be packed in a packet (and the result can be encoded using any error
correction codes) and transmitted to all the clients. For instance,
$P_i^{(t)}, \forall t\in \{1,\dots,64\}$ and the $i$'th element of
all regenerating vectors $\vec{\nu}^(t), \forall t=1,\dots,64$ plus
the error correction code bits can be packed as a unit packet $P_i$
by the protocol. Each receiver can acknowledge the reception of each
packet (here by a packet we mean the packed version of the mentioned
components) as it would be extremely costly to send feedback for
each bit and is not practical. However, separate operations are
performed over each bit according to the corresponding decoding
coefficients.

Initial keys can be distributed via any secure private channel or by
using a similar method to \cite{SecrecyLaszlo}. The basis of key
sharing method in \cite{SecrecyLaszlo} is to take the diversity of
packet erasure patterns over the downlink wireless channels between
the base station and the wireless clients as an opportunity to
provide secret keys to the corresponding clients. In
\cite{SecrecyLaszlo}, the base station starts to generate random
messages and broadcast it to all the clients. If a message is
received only by a client $c$ but not by the other ones it can be
used as a key shared between $c$ and the base station. If a key
$\mathcal{K}$ is shared with client $c$, a message $\omega$ is
encoded as $x=\omega\oplus K$ similar to the so-called one-time pad
of Shannon \cite{ShannonSecrecy}. If $x$ is only received by $c$ but
not by the other clients, then $\mathcal{K}$ can be reused,
otherwise $\mathcal{K}$ is burnt and a new key should be used for
sending the next message to $c$. We can use the same approach to
distribute the initial keys, but the initial keys are not required
to be renewed which substantially reduces the the amount of
transmission (and consequently increases the throughput) at the
price of relaxing the secrecy condition to weaker one but practical
yet. To apply the method of \cite{SecrecyLaszlo} to our problem
which is more general in a sense that each client might demand an
arbitrary subset of the messages, the base station should keep
transmitting random messages (of the format
$\mathcal{K}_i=(\pi_{\tilde{N}}^i,\kappa_i)$ until a case is
observed that all clients belonging to $R_i$ have heard it but not
any of the other clients. This might be considered very costly in
terms of throughput efficiency specially if the number of users is
large, however it should be noted that this only happens once at the
beginning and only the regenerating vector $\vec{\nu}^(t)$ is
transmitted publicly at each round afterwards. Therefore if
$T\rightarrow \infty$ the overhead of initial key sharing will tend
to zero. However, as mentioned earlier, the initial keys can be
shared using any type of secret key management method.

As mentioned earlier, the packet recovery can be accomplished either
by the base station or via cooperation. In \cite{ITW2012}, we
assumed that the operations are done over a large field size and
some elements of each row $\mathbf{A}_i$ might have been set to be
zero. Therefore each needs to send a negative acknowledgement (NACK)
for those packets which have not received and need them according to
$\mathbf{A}_i$. As the operations are done over field size 2 in this
paper, if a client sends a NACK only for those packets in its
\emph{wants set} which are not received, some entries in
$\mathbf{A}_i$ would be disclosed which violates the secrecy.
Therefore, each client should send a NACK for all its missing
packets. Then the packets can be recovered using the methods
developed in
\cite{BroadcastNguyen,ARQMedard,OnlineBroadcast,SamehDelay} for
retransmission via the base station or via cooperative data exchange
\cite{ITW2010, ISIT2010,NetCod2011}.

\section{Proof of Secrecy}\label{sec:Proof}
%As discussed earlier, perfect secrecy in an information theoretic
%sense is a strict condition which is hardly satisfied at the price
%of a huge overhead. The capacity region of perfect secrecy is
%characterized for cases of 2-users, 3-users and (?) k-users where
%each user is interested in a distinct message and the channel
%between the base station and each client is modeled as a packet
%erasure channel. The core idea is that if sufficient number of
%random messages are broadcast over the channel, each client $c_i$
%might have received enough number of messages which are not heard by
%the clients. Such messages can be used as private keys to be
%combined with the messages intended to be sent to client $c_i$ using
%a method similar to the so-called one-time pad of Shannon. However,
%a key can be reused if the encrypted message for client $c_i$ has
%been only received by $c_i$. Obviously when the number of clients
%increases, the chance of missing a packet by all of the clients
%except one decreases and consequently, providing perfect secrecy
%implies many transmissions which reduces the throughput.
In this section we prove that the aforementioned scheme in section
\ref{sec:System} is weakly secure in an information theoretic sense.
The concept of weakly security introduced in \cite{WeaklySecure}
implies that an unprivileged party can not obtain any
\emph{meaningful} information about a message intended for a group
of privileged users. Weakly security relaxes the perfect secrecy
condition (which does not allow any information to be leaked to an
unauthorized party) to a weaker but more practical condition of
security \cite{WeaklySecure}.

Consider the set of transmitted packets $P^{(t)}$ and also let
$G\subseteq X^{(t)}$.
 %The set $U$ corresponds to a sets of initial
%keys, which is denoted by $\mathcal{K}_U$, i.e.
%$\mathcal{K}_U=\{\mathcal{K}_i: x_i^{(t)}\in U\}$.
We assume each client has only received the set of keys which is
privileged for, i.e. $c_i$ or any other external eavesdropper $E$
does not hold $\mathcal{K}_j$ if $c_i\notin R_j$. The following
theorem states the main result of this paper (assuming the
regenerating vector is updated at each round):
\begin{theorem}
\label{MainTheorem}
 An unprivileged client for packet $x_i$ or any
external eavesdropper $E$ can not obtain any information about any
individual message $x_i^{(t)}$, i.e. $I(x_i^{(t)};P^{(t)}|G)=0$,
assuming that $E$ initially holds $G$ and $x_i^{(t)}\notin G$.
\end{theorem}
Before proving the theorem a few lemmas are proven or stated. The
first lemma proved by Gallager is borrowed from
\cite{GallagerLDPC,DistributionFiniteField,TransformBasedDistribution}
where the probability distribution of a linear combination of random
variables over a finite field is studied.
\begin{lemma}
\label{Gallager}
 Let $\beta_1,\dots,\beta_n \in GF(2)$ be random variables
over the field with $prob(\beta_i=1)=\delta_{i1}$, and let
$m_1,\dots,m_n \in GF(2)$. Then the probability distribution of the
linear combination $s=\bigoplus_{i=1}^n m_i \beta_i$ is computed as
follows:
\begin{equation}
\begin{split}
Prob(s=1)=\frac{1-\prod_{i=0}^n (1-2\delta_{i1})}{2},\\
Prob(s=0)=1-Prob(s=1)
\end{split}
\end{equation}
\end{lemma}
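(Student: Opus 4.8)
The plan is to prove the identity via the sign character of $GF(2)$, which converts the $\oplus$ sum into an ordinary product and then exploits independence. First I would dispose of the coefficients $m_i$: a term with $m_i=0$ contributes $m_i\beta_i=0$ deterministically and so drops out of the XOR, whereas a term with $m_i=1$ contributes $m_i\beta_i=\beta_i$. Hence it suffices to establish the formula for the sum over the surviving index set $S=\{i:m_i=1\}$. (If one wishes to keep the product over all $i$ as literally written, a zero-coefficient term is read as having effective success probability $0$, contributing the neutral factor $1-2\cdot0=1$; I would note this bookkeeping explicitly.)

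The core step uses the map $b\mapsto(-1)^b$ from $GF(2)$ into $\{+1,-1\}$, under which addition modulo $2$ becomes multiplication, i.e. $(-1)^{a\oplus b}=(-1)^{a}(-1)^{b}$ for $a,b\in GF(2)$. Applying this to $s=\bigoplus_{i\in S}\beta_i$ and taking expectations, the independence of the $\beta_i$ factorizes the expectation:
\begin{equation}
E\bigl[(-1)^{s}\bigr]=E\Bigl[\prod_{i\in S}(-1)^{\beta_i}\Bigr]=\prod_{i\in S}E\bigl[(-1)^{\beta_i}\bigr].
\end{equation}
For a single bit, $E\bigl[(-1)^{\beta_i}\bigr]=Prob(\beta_i=0)-Prob(\beta_i=1)=1-2\delta_{i1}$, so the right-hand side equals $\prod_{i\in S}(1-2\delta_{i1})$.

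Finally I would convert back to a probability. On the left, $E\bigl[(-1)^{s}\bigr]=Prob(s=0)-Prob(s=1)=1-2\,Prob(s=1)$, which yields $1-2\,Prob(s=1)=\prod_{i\in S}(1-2\delta_{i1})$ and hence the claimed $Prob(s=1)=\tfrac{1}{2}\bigl(1-\prod_{i}(1-2\delta_{i1})\bigr)$, with $Prob(s=0)=1-Prob(s=1)$ immediate. An equivalent route is induction on $n$: writing $s_n=s_{n-1}\oplus m_n\beta_n$ and using the two-bit XOR convolution gives the recursion $1-2\,Prob(s_n=1)=\bigl(1-2\,Prob(s_{n-1}=1)\bigr)(1-2\delta_{n1})$, so the bias is multiplicative, and the base case $n=1$ matches $\tfrac{1}{2}(1-(1-2\delta_{11}))=\delta_{11}$ directly. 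The result is classical, so there is no deep obstacle here; the only point that genuinely needs care is the coefficient and index bookkeeping described above—in particular reconciling the literal product $\prod_{i=0}^{n}$ of the statement with the terms that actually survive the restriction to $m_i=1$.
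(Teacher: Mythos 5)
Your argument is correct, but there is nothing in the paper to compare it against line by line: the paper does not prove this lemma at all --- it imports it as a known result (``proved by Gallager'') from the cited references \cite{GallagerLDPC,DistributionFiniteField,TransformBasedDistribution}. In effect you have reconstructed the standard transform-based proof that those references contain: the character $b\mapsto(-1)^b$ turns the $GF(2)$ sum into an ordinary product, independence factorizes the expectation, each bit contributes its bias $E[(-1)^{\beta_i}]=1-2\delta_{i1}$, and the identity $E[(-1)^s]=1-2\,Prob(s=1)$ converts back; your fallback induction via the two-bit XOR convolution is the other classical route, closer to how Gallager's original LDPC lemma is usually derived. Two of your side remarks deserve emphasis, because they repair genuine blemishes in the statement as printed. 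First, independence of the $\beta_1,\dots,\beta_n$ is indispensable but is nowhere stated in the lemma; without it the formula fails (take $\beta_2=\beta_1$ uniform: the formula gives $Prob(s=1)=\tfrac{1}{2}$ while $s=\beta_1\oplus\beta_1=0$ almost surely), so you are right to surface it as an explicit hypothesis rather than use it tacitly. Second, your coefficient bookkeeping is exactly what is needed: the product should start at $i=1$ (there is no $\beta_0$), and the $m_i$ must enter the formula somewhere --- either by restricting the product to $S=\{i:m_i=1\}$ as you do, or by replacing $\delta_{i1}$ with $m_i\delta_{i1}$ so that a term with $m_i=0$ contributes the neutral factor $1$; taken literally, with every factor $(1-2\delta_{i1})$ retained regardless of $m_i$, the paper's formula is wrong whenever some $m_i=0$ and $\delta_{i1}\neq 0$. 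In short, there is no gap in your proof; if anything it is more careful than the source it replaces.
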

\begin{lemma}
\label{Gaussian}
 Suppose that $A^*=[\alpha_{ij}^*]_{n\times n}$ is a matrix
of random elements $\alpha_{ij}^*\in \mathbb{F}_2$, where
$prob(\alpha_{ij}^*=1)=prob(\alpha_{ij}^*=0)=\frac{1}{2},~\forall
i,j\in \tilde{n}$. Let $X^*=A^*P^*$ be a system of linear equations
for known vector $X^*=[x_1^* \dots x_n^*]$ and the vector of
unknowns $P^*=[p_1^* \dots p_n^*]$. Then if the system is rewritten
in the form $\hat{X}=\hat{A}P^*$ using Gaussian elimination method,
where $\hat{A}=[\hat{\alpha}_{ij}]_{n\times n}$ is an upper
triangular matrix, assuming the last entry $p_n^*$ written in the
form $\gamma_{n1}x_1^*+\dots+\gamma_{nn}x_n^*$, then
$prob(\gamma_{ni}=1)=prob(\gamma_{ni}=0)=\frac{1}{2}$.
\end{lemma}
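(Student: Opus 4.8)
The plan is to track how the last unknown $p_n^*$ comes to be expressed in terms of the known right-hand side $X^*$, and to show that each resulting coefficient $\gamma_{ni}$ behaves like a fair coin. First I would make the dependence explicit: write the forward phase of Gaussian elimination as left-multiplication of the system by a product $E$ of elementary matrices (row-additions and, wherever a pivot vanishes, row-swaps), so that $E A^* = \hat{A}$ is upper triangular and $\hat{X} = E X^*$. Since $\hat\alpha_{nn} p_n^* = \hat x_n$ and $\hat x_n = \sum_i \gamma_{ni} x_i^*$, the vector $(\gamma_{n1},\dots,\gamma_{nn})$ is precisely the last row of $E$; in the nonsingular case it coincides with the last row of $(A^*)^{-1}$. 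Reducing the claim to ``the last row of the elimination transform has uniform entries'' is the first step.

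Second, I would isolate a single free bit and appeal to Lemma \ref{Gallager}. The coefficient $\gamma_{ni}$ is built up additively over the elimination steps: each time a pivot row is added into the working bottom row, the coefficient vector attached to that pivot row is XORed into the coefficient vector attached to $\hat x_n$. For each target index $i$ I would try to exhibit one entry $\alpha_{jk}^*$ of the original matrix on which $\gamma_{ni}$ depends affinely, i.e. $\gamma_{ni} = \alpha_{jk}^* \oplus g$ with $g$ a function of the remaining entries, and then condition on all other entries. Conditioned in this way, $\gamma_{ni}$ is a linear combination over $\mathbb{F}_2$ in which the uniform bit $\alpha_{jk}^*$ appears with coefficient $1$; Lemma \ref{Gallager} then forces $\mathrm{Prob}(\gamma_{ni}=1) = \tfrac{1-\prod(1-2\delta)}{2} = \tfrac12$, because the factor $1 - 2\cdot\tfrac12 = 0$ contributed by $\alpha_{jk}^*$ annihilates the product. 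Averaging back over the conditioning preserves the value $\tfrac12$.

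Third, and this is where I expect the real difficulty to lie, one must justify that such a clean ``single free bit'' genuinely exists. The map from the random entries $\alpha_{ij}^*$ to $\gamma_{ni}$ is not linear but polynomial over $\mathbb{F}_2$: later elimination steps operate on already-updated entries, so a given $\alpha_{jk}^*$ can re-enter $\gamma_{ni}$ through several higher-order terms, and partial pivoting makes the very sequence of operations depend on the data. I would therefore organize the argument inductively on the size of the system, peeling off the first pivot column and recursing on the trailing $(n-1)\times(n-1)$ block, and I would treat the pivoting and the rank-deficient realizations separately, since for certain targets (for instance the coefficient of $x_n$ in a reduction that performs no swaps) the dependence degenerates and the naive fair-coin conclusion can fail. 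Isolating these degenerate and pivot-dependent cases, and confirming that the generic behaviour dominates so that the marginal distribution is exactly $\tfrac12$, is the principal obstacle I anticipate.
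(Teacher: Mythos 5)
Your first step is sound and matches what the paper does implicitly: $(\gamma_{n1},\dots,\gamma_{nn})$ is the last row of the elimination transform $E$, i.e.\ of $(A^*)^{-1}$ in the nonsingular case. The genuine gap is the step you yourself flag in your third paragraph, and it cannot be closed in the form you propose. Over $\mathbb{F}_2$ the coefficient $\gamma_{ni}$ equals the minor $\det\bigl(A^*_{\hat{\imath}\hat{n}}\bigr)$ obtained by deleting row $i$ and column $n$; this is multilinear in the matrix entries, but the coefficient of any particular entry $\alpha^*_{jk}$ in that polynomial is itself a sub-minor, which vanishes on an event of positive probability. So there is no single entry that appears affinely with coefficient $1$ under \emph{every} conditioning on the remaining entries; on the degenerate conditionings $\gamma_{ni}$ is deterministic, and averaging back gives a mixture of $\tfrac{1}{2}$ and a constant rather than exactly $\tfrac{1}{2}$. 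Concretely, without conditioning on invertibility $\mathrm{Prob}\bigl(\det(A^*_{\hat{\imath}\hat{n}})=1\bigr)=\prod_{j=1}^{n-1}(1-2^{-j})\neq\tfrac{1}{2}$, and even conditioned on $A^*$ being invertible the last row of $(A^*)^{-1}$ is uniform over the $2^n-1$ nonzero vectors, giving $\mathrm{Prob}(\gamma_{ni}=1)=2^{n-1}/(2^n-1)$. Your plan therefore correctly identifies the obstacle but does not, and as formulated cannot, overcome it.

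For comparison, the paper's own argument takes a different, parity-counting route: it tracks how many times row $i$ is added, directly or through intermediate rows, into row $n$ during elimination, and asserts that this count is even or odd with equal probability, explicitly omitting the justification ``due to space limitations.'' That omitted claim is precisely the uniformity statement your free-bit argument is trying to manufacture, so both arguments founder at the same point; your write-up is the more candid of the two about where the difficulty sits and about the existence of degenerate and pivot-dependent cases in which the fair-coin conclusion fails.
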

\begin{proof}
To transform the matrix $A^*$ to an upper triangular matrix
$\hat{A}$, row and column operations are applied to $A^*$ in a way
that at the end of the elimination process, all entries
$\alpha_{ij}=0,~\forall j<i-1$.  Depending on the value of the
element $(\ell,i)$, row $j\geq i$ is added to row $\ell$ with
probability $\frac{1}{2}$ (if the element $(\ell,i)$ is $1$,
otherwise no action is required for this element which happens with
probability $\frac{1}{2}$). It should be noted that all the rows
$\hat{\ell}$ that $i<\hat{\ell}<\ell$ might have been affected by
the row $i$ with probability $\frac{1}{2}$ (by affected we mean that
row $i$ has been added to row $\hat{\ell}$). Therefore, in the last
round of elimination process to remove each element $(n,j<n)$, one
of the previous rows might be added to the row $n$ which the row $j$
might be affect even or odd number of times by the row $i<j$ with
equal probabilities (the proof is of equal probabilities is based on
considering all possibilities of being affected by a previous row
and is removed due to space limitations). Even number of being
affected by the $i$'th row updates the coefficient $\gamma_{ni}$ to
be zero and odd number of being affected by the $i$'th row ends up
with $\gamma_{ni}=1$. Therefore
$prob(\gamma_{ni}=1)=prob(\gamma_{ni}=0)=\frac{1}{2}$.
\end{proof}
Now the proof of Theorem \ref{MainTheorem} is established using
Lemma \ref{Gallager} and Lemma \ref{Gaussian}.
\begin{proof}
%The probability distribution of
%$\kappa_i(\pi^i(\tilde{\mathbf{n}}))$ is obtained as follows:
%Consider an element $n_0\in \tilde{\mathbf{n}}$.
It is easy to show that
$p[\kappa_i(\pi^i(n_0))=1]+\vec{\nu}(t)=\frac{1}{2}$.
%\begin{equation}
%\begin{split}
%p[\kappa_i(\pi^i(n_0))=1]=\sum_{j=1}^n
%p(\pi^i(n_0)=j)p(\kappa_i(j)=1)
% =\\
% \sum_{j=1}^n
%\frac{1}{n}\frac{1}{2}=\frac{1}{2}
%\end{split}
%\end{equation}
%p[\kappa_i(\pi^i(n_0))=0]=1-p[\kappa_i(\pi^i(n_0))=1]=\frac{1}{2}
as the regenerating vector is assumed to be drawn from a uniform
distribution. Therefore the elements of $A^{(t)}$ would have a
uniform distribution over $\mathbb{F}_2$. Using lemma
\ref{Gaussian}, it is showed that each packet $P_i^{(t)}$ (which can
be considered as the last element of $A^*$ by swapping the rows) can
be written in the form $\gamma_{i1}x_1^*+\dots+\gamma_{in}x_n^*$ (if
the matrix $A^{(t)}$ is not singular), where
$prob(\gamma_{nj}=1)=prob(\gamma_{nj}=0)=\frac{1}{2}$. Consequently,
using lemma \ref{Gallager}, it is proven that
$prob(P_i^{(t)}=1)=prob(P_i^{(t)}=0)=\frac{1}{2}$. Therefore, each
packet is independent of any individual message $x_j^{(t)}$. Hence,
it can be concluded that $I(x_i^{(t)};P^{(t)}|G)=0$.
\end{proof}

%\input{Conclusion}

%% References:
%% We recommend the usage of BibTeX:
%%
%\bibliographystyle{IEEEtran}
%\bibliography{definitions,bibliofile}
%%
%% where we here have assume the existence of the files
%% definitions.bib and bibliofile.bib.
%% BibTeX documentation can be obtained at:
%% http://www.ctan.org/tex-archive/biblio/bibtex/contrib/doc/
%%
%%
%%
%% Or manual references (pay attention to consistency!):
\section*{Acknowledgment}

This work was supported under Australian Research Council Discovery
Projects funding scheme (project no. DP0984950.

\bibliographystyle{IEEEtran}
%\bibliography{IEEEabrv,refs2009}
% Generated by IEEEtran.bst, version: 1.13 (2008/09/30)

\end{document}